\documentclass[11pt]{article}

\usepackage{amsmath,amsthm,paralist,bm,yhmath}
\usepackage{amssymb}
\usepackage{epstopdf}
\usepackage{epsfig}
\usepackage{graphicx}
\usepackage{cite}
\usepackage{empheq}
\usepackage[footnotesize]{caption}

\usepackage{hyperref}

\makeatletter
\let\NAT@parse\undefined
\makeatother
\usepackage[sort&compress, numbers]{natbib}

\usepackage{subfigure}
\usepackage{fullpage, upgreek}
\usepackage[usenames]{color}
\DeclareGraphicsRule{.tif}{png}{.png}{`convert #1 `dirname #1`/`basename #1 .tif`.png}

\newcommand{\Rbb}{\mathbb{R}}

\newcommand{\scp}[2]{\langle #1, #2 \rangle}
\newcommand{\Nbb}{\mathbb{N}}
\newtheorem{theorem}{Theorem}

\newtheorem{lemma}{Lemma}
\newcommand{\inv}[1]{\frac{1}{#1}}
\newcommand{\supp}{{\rm supp}\,}
\newcommand{\tinv}[1]{{\textstyle\frac{1}{#1}}}
\newcommand{\sign}{{\rm sign}\,}

\newcommand{\ie}{{i.e.}, } 
\newcommand{\eg}{{e.g.}, } 

\DeclareMathOperator*{\argmin}{argmin}

\DeclareMathOperator{\Id}{Id}

\newcommand{\norm}[1]{\|#1\|}

\newcommand{\cl}{\mathcal}
\newcommand{\bs}{\boldsymbol}
\newcommand{\bb}{\mathbb}
\setlength{\parskip}{0.1cm}

\newlength{\imgwidth}
\setlength{\imgwidth}{89mm}

\newcommand{\st}{\quad{\rm s.t.}\quad}

\title{On the optimality of a $\ell_1/\ell_1$ solver for sparse signal recovery from sparsely
  corrupted compressive measurements.}

\author{Laurent Jacques\thanks{L.\,J. is funded by the Belgian F.R.S-FNRS. ICTEAM Institute, ELEN Department, Universit\'e catholique de
    Louvain (UCL), B-1348 Louvain-la-Neuve, Belgium. Email:
    laurent.jacques@uclouvain.be}}
\date{\today}

\pagestyle{plain}

\begin{document}

\maketitle
\begin{abstract}
This short note proves the $\ell_2-\ell_1$ instance optimality of a
$\ell_1/\ell_1$ solver, \ie a variant of \emph{basis pursuit denoising}
with a $\ell_1$ fidelity constraint, when applied to the estimation of
sparse (or compressible) signals observed by sparsely corrupted compressive
measurements.  The approach simply combines two known results due to
Y. Plan, R. Vershynin and E. Cand\`es.  
\end{abstract}

\paragraph*{Conventions:} Most of domain dimensions (\eg $M$, $N$) are
denoted by capital roman letters. Vectors and matrices are associated
to bold symbols while lowercase light letters are associated to scalar
values.  The $i^{\rm th}$ component of a vector $\bs u$ is $u_i$ or
$(\bs u)_i$. The identity matrix is $\Id$. 
The set of indices in $\Rbb^D$ is $[D]=\{1,\,\cdots,D\}$. Scalar product between
two vectors $\bs u,\bs v \in \Rbb^{D}$ reads $\bs u^* \bs v = \scp{\bs
  u}{\bs v}$ (using the transposition $(\cdot)^*$). For any $p\geq 1$,
$\|\cdot\|_p$ represents the $\ell_p$-norm such that $\|\bs u\|_p^p =
\sum_i |u_i|^p$ with $\|\bs u\|=\|\bs u\|_2$ and $\|\bs u\|_\infty =
\max_i |u_i|$. The $\ell_0$ ``norm'' is $\|\bs u\|_0 = \# \supp \bs
u$, where $\#$ is the cardinality operator and $\supp \bs u = \{i: u_i
\neq 0\} \subseteq [D]$.  For $\cl S \subseteq [D]$, $\bs u_{\cl S}\in
\Rbb^{\#\cl S}$ (or $\bs \Phi_{\cl S}$) denotes the vector (resp. the
matrix) obtained by retaining the components (resp. columns) of $\bs
u\in\Rbb^D$ (resp. $\bs \Phi\in\Rbb^{D'\times D}$) belonging to $\cl
S\subseteq [D]$. The operator $\cl H_K$ is the hard thresholding operator setting all the
coefficients of a vector to 0 but those having the $K$ strongest
amplitudes. The set of canonical $K$-sparse signals in $\Rbb^N$ is $\Sigma_K=\{\bs v\in\Rbb^N: \|\bs v\|_0 \leq
K\}$. $B^N_2$ and $S^{N-1}$ are the $\ell_2$ ball
and $(N-1)$-sphere in $\Rbb^N$, respectively. Finally, the operator
$\sign \lambda$, which equals to $1$ if
$\lambda$ is positive and $-1$ otherwise, is applied component wise onto vectors.

\section{Introduction}
\label{sec:introduction}

Let us consider the case where a sparse (or compressible) signal $\bs
x\in\Rbb^N$ is observed with a random Gaussian matrix $\bs \Phi \sim
\cl N^{M \times N}(0,1)$,
\begin{equation}
  \label{eq:2}
  \bs y = \bs \Phi \bs x + \bs n,
\end{equation}
with a sparse (or Laplacian) noise $\bs n$ of bounded $\ell_1$-power,
\ie there exists a bound $\epsilon>0$ such that $\|\bs n\|_1 \leq
\epsilon$ with high (and controlled) probability. 

In this short note, we prove the stability of a variant of the \emph{basis
pursuit denoising} program, namely
\begin{equation}
  \label{eq:1}
  \argmin_{\bs u\in\Rbb^N} \|\bs u\|_1\ \st \|\bs y - \bs\Phi u\|_1
  \leq \epsilon\tag{BPDN-$\ell_1$},
\end{equation}
in estimating $\bs x$ from $\bs y$ under an $\ell_1$-fidelity constraint. The mathematical tools we are
going to use are those developed in the recent work of Y. Plan and
R. Vershynin in the context of 1-bit compressed sensing
\cite{plan2012robust} combined with Cand\`es' simplified proof of
basis pursuit denoising $\ell_2-\ell_1$-instance optimality \cite{candes2008rip}. 
No elements are specially new except their combination. In particular,
it is interesting to see how these two pieces
of works fit nicely in order to reach the announced objective. 

\section{BPDN-$\ell_1$ instance optimality}
\label{sec:bpdn-ell_1-instance}

Here is the main result of this note.

\begin{theorem} 
\label{thm:bpdn-ell_1-instance}
Let $\bs \Phi\in\Rbb^{M\times N}$ be a sensing matrix
  used in \eqref{eq:2} and assume
  that there exist 3 constants $\delta_{2K},\delta_{3K}\in (0,1)$ and $\nu>0$
  such that,
  for all $\bs u\in\Sigma_{2K}$ and $\bs v\in\Sigma_{K}$ with
  $\scp{\bs u}{\bs v}=0$,
\begin{align}
  \label{eq:l1-norm-preserv-2K}
  \big|\tinv{M} \|\bs \Phi \bs u\|_1\  -\
  \nu\|\bs u\| \big|&\leq \delta_{2K} \|\bs u\|,\\
  \label{eq:l1-orthog-preserv-3K}
  \big|\tinv{M} \scp{\sign(\bs \Phi \bs u)}{\bs\Phi\bs v}\big|&\leq \delta_{3K} \|\bs
  v\|.
\end{align}
Then, if $\delta_{2K} + \delta_{3K} \leq \nu -
\tinv{2}$, the solution $\bs x^*$ of BPDN-$\ell_1$ respects
$$
\|\bs x^* - \bs x\|\ \leq\ 8 \frac{\epsilon}{M}\ +\ 12 e_0(K),
$$
with $e_0(K) = \|\bs x - \bs x_K\|_1/\sqrt K$.  
\end{theorem}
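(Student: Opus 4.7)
The plan is to transpose the simplified BPDN argument of \cite{candes2008rip} to the $\ell_1$-fidelity setting, using the hypotheses \eqref{eq:l1-norm-preserv-2K}--\eqref{eq:l1-orthog-preserv-3K} in exactly those spots where the classical RIP would enter. Set $\bs h := \bs x^* - \bs x$, let $T_0 \subset [N]$ be the support of the best $K$-term approximation $\bs x_K = \cl H_K \bs x$, and partition $T_0^c$ as $T_1, T_2, \ldots$, where $T_j$ indexes the $j$th block of $K$ largest-magnitude coefficients of $\bs h_{T_0^c}$. Writing $T_{01} := T_0 \cup T_1$, one has $\bs h_{T_{01}} \in \Sigma_{2K}$ and $\bs h_{T_j} \in \Sigma_K$ with support disjoint from $T_{01}$ for $j \geq 2$.

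Two standard observations feed the argument. First, since both $\bs x$ and $\bs x^*$ are $\epsilon$-feasible, the triangle inequality gives the ``tube'' bound $\|\bs\Phi \bs h\|_1 \leq 2\epsilon$. Second, $\ell_1$-minimality of $\bs x^*$ together with $\|\bs x\|_1 = \|\bs x_{T_0}\|_1 + \|\bs x_{T_0^c}\|_1$ yields the usual ``cone'' inequality $\|\bs h_{T_0^c}\|_1 \leq \|\bs h_{T_0}\|_1 + 2\|\bs x_{T_0^c}\|_1$, whence the two bounds
\[
\|\bs h_{T_0^c}\|_1/\sqrt{K} \;\leq\; \|\bs h_{T_{01}}\| + 2\, e_0(K), \qquad \|\bs h_{T_{01}^c}\| \;\leq\; \sum_{j\geq 2}\|\bs h_{T_j}\| \;\leq\; \|\bs h_{T_0^c}\|_1/\sqrt{K}
\]
follow by the standard tail-sorting $\ell_1$-vs-$\ell_2$ argument.

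The core of the proof then reduces to controlling $\|\bs h_{T_{01}}\|$ via the identity
\[
\|\bs\Phi \bs h_{T_{01}}\|_1 \;=\; \scp{\sign(\bs\Phi \bs h_{T_{01}})}{\bs\Phi \bs h_{T_{01}}} \;=\; \scp{\sign(\bs\Phi \bs h_{T_{01}})}{\bs\Phi \bs h} \;-\; \sum_{j \geq 2}\scp{\sign(\bs\Phi \bs h_{T_{01}})}{\bs\Phi \bs h_{T_j}},
\]
which plays the role of the Pythagoras-type inner-product decomposition of Candès' $\ell_2$ proof. Hypothesis \eqref{eq:l1-norm-preserv-2K} applied to $\bs h_{T_{01}} \in \Sigma_{2K}$ delivers the lower bound $\|\bs\Phi \bs h_{T_{01}}\|_1 \geq M(\nu - \delta_{2K})\|\bs h_{T_{01}}\|$. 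On the right-hand side, the first inner product is $\leq \|\sign(\cdot)\|_\infty \|\bs\Phi \bs h\|_1 \leq 2\epsilon$, and each cross term is handled by \eqref{eq:l1-orthog-preserv-3K}: $\bs h_{T_{01}}$ and $\bs h_{T_j}$ have disjoint supports for $j \geq 2$, so $|\scp{\sign(\bs\Phi \bs h_{T_{01}})}{\bs\Phi \bs h_{T_j}}| \leq M \delta_{3K}\|\bs h_{T_j}\|$. Summing and invoking the tail estimate gives
\[
M(\nu - \delta_{2K} - \delta_{3K})\,\|\bs h_{T_{01}}\| \;\leq\; 2\epsilon \;+\; 2 M \delta_{3K}\, e_0(K).
\]

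The hypothesis $\delta_{2K} + \delta_{3K} \leq \nu - \tinv{2}$ is exactly what keeps the coefficient on the left $\geq \tinv{2}$, so $\|\bs h_{T_{01}}\| \leq 4\epsilon/M + 4 \delta_{3K}\, e_0(K)$. Combining with $\|\bs h\| \leq \|\bs h_{T_{01}}\| + \|\bs h_{T_{01}^c}\| \leq 2 \|\bs h_{T_{01}}\| + 2\, e_0(K)$ and using $\delta_{3K} < 1$ produces the announced bound $\|\bs x^* - \bs x\| \leq 8\epsilon/M + 12\, e_0(K)$. The main obstacle, and the only genuinely new ingredient relative to \cite{candes2008rip}, is the sign identity above: it is the single place where the $\ell_1$-fidelity analogue of the RIP (embodied in \eqref{eq:l1-orthog-preserv-3K}) replaces the $\ell_2$ inner-product manipulation that usually closes the proof.
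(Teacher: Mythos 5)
Your proposal is correct and follows essentially the same route as the paper: the same $T_0, T_1, T_2,\ldots$ decomposition, the same sign identity $\|\bs\Phi\bs h_{T_{01}}\|_1 = \scp{\sign(\bs\Phi\bs h_{T_{01}})}{\bs\Phi\bs h} - \sum_{j\geq 2}\scp{\sign(\bs\Phi\bs h_{T_{01}})}{\bs\Phi\bs h_{T_j}}$, and the same placement of hypotheses \eqref{eq:l1-norm-preserv-2K} and \eqref{eq:l1-orthog-preserv-3K} where the RIP would appear in Cand\`es' argument. Your bookkeeping even yields the slightly sharper coefficient $8\delta_{3K}+2\leq 10$ on $e_0(K)$, which of course still implies the stated bound with $12$.
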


Before to prove this theorem, the following lemma (mainly a rewriting
of a result given in \cite{plan2012robust}) assures us on the
feasibility of the conditions \eqref{eq:l1-norm-preserv-2K} and
\eqref{eq:l1-orthog-preserv-3K}.  

\begin{lemma}
Let $N,M,K\in \Nbb$ and $\delta\in [0,1]$. There exist two constants
$C,c>0$ such that, for 
\begin{equation}
\label{eq:cond-on-M}
M \geq C \delta^{-6} K \log (2N/K)
\end{equation}
and $\bs \Phi \sim \cl N^{M\times N}(0,1)$, we have, with a probability at
least $1-8\exp(-c\delta^2 M)$, 
\begin{align}
  \label{eq:l1-norm-preserv}
  \big|\tinv{M} \|\bs \Phi \bs u\|_1\,-\,
  \sqrt{\tfrac{2}{\pi}}\|\bs u\| \big|&\leq \delta \|\bs u\|,\\
  \label{eq:l1-orthog-preserv}
  \big|\tinv{M} \scp{\sign(\bs \Phi \bs u)}{\bs\Phi\bs v}\big|&\leq \delta \|\bs
  v\|,
\end{align}
for all $\bs u, \bs v \in \Sigma_K$ with $\scp{\bs u}{\bs v}=0$.
\end{lemma}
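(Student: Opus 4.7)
The strategy is to derive both inequalities from one master statement, the \emph{sign product embedding} (SPE) theorem for Gaussian measurements of sparse vectors, due to Plan and Vershynin \cite{plan2012robust}. In its uniform form, their result asserts that, provided \eqref{eq:cond-on-M} holds, there is a single event of probability at least $1-8\exp(-c\delta^2 M)$ on which
\[
  \bigl|\tinv{M}\scp{\sign(\bs\Phi\tilde{\bs u})}{\bs\Phi\tilde{\bs v}}\,-\,\sqrt{\tfrac{2}{\pi}}\scp{\tilde{\bs u}}{\tilde{\bs v}}\bigr|\ \leq\ \delta
\]
holds simultaneously for \emph{all} pairs of unit $K$-sparse vectors $\tilde{\bs u},\tilde{\bs v}\in\Sigma_K\cap S^{N-1}$.

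Starting from this, my plan is a pair of rescalings followed by two specializations. Both summands inside the absolute value are linear in $\tilde{\bs v}$, so rescaling $\tilde{\bs v}\mapsto\bs v$ upgrades the estimate to arbitrary $\bs v\in\Sigma_K$ with $\delta\|\bs v\|$ on the right. Next, since $\sign$ is invariant under multiplication by positive scalars, $\sign(\bs\Phi\bs u)=\sign(\bs\Phi(\bs u/\|\bs u\|))$ for every nonzero $\bs u\in\Sigma_K$; applying the previous step with $\tilde{\bs u}=\bs u/\|\bs u\|$ then yields
\[
  \bigl|\tinv{M}\scp{\sign(\bs\Phi\bs u)}{\bs\Phi\bs v}\,-\,\sqrt{\tfrac{2}{\pi}}\,\scp{\bs u}{\bs v}/\|\bs u\|\bigr|\ \leq\ \delta\|\bs v\|,
\]
uniformly over $\bs u\in\Sigma_K\setminus\{0\}$ and $\bs v\in\Sigma_K$ (the case $\bs u=0$ being trivial). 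Equation \eqref{eq:l1-orthog-preserv} is now the specialization $\scp{\bs u}{\bs v}=0$, and \eqref{eq:l1-norm-preserv} follows from the choice $\bs v=\bs u$ combined with the identity $\scp{\sign(\bs\Phi\bs u)}{\bs\Phi\bs u}=\|\bs\Phi\bs u\|_1$ and $\scp{\bs u}{\bs u}/\|\bs u\|=\|\bs u\|$.

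The only genuine work is bookkeeping: one must locate within \cite{plan2012robust} the precise uniform SPE that delivers exactly the sample count $M\geq C\delta^{-6}K\log(2N/K)$ and failure probability $8\exp(-c\delta^2 M)$, and verify that the \emph{same} high-probability event supports both claimed inequalities, so that no union bound is required. The characteristic $\delta^{-6}$ dependence---driven by the roughness of the $\sign$ nonlinearity and the corresponding fine net argument in \cite{plan2012robust}---is precisely what dictates the constants appearing in \eqref{eq:cond-on-M}.
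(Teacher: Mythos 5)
Your proposal is correct and follows essentially the same route as the paper: both invoke the uniform sign-product concentration of Plan--Vershynin (the paper cites their Prop.~4.3 with $\tau=0$, over $\bs a\in\Sigma_K\cap S^{N-1}$ and $\bs b\in(\Sigma_K\cap B_2^N)-(\Sigma_K\cap B_2^N)$, together with $\E f_{\bs a}(\bs b)=\sqrt{2/\pi}\,\scp{\bs a}{\bs b}$), then exploit positive homogeneity of $\sign$ in $\bs u$ and linearity in $\bs v$, and finally specialize to $\scp{\bs u}{\bs v}=0$ and to $\bs v=\bs u$ with $\scp{\sign(\bs\Phi\bs u)}{\bs\Phi\bs u}=\|\bs\Phi\bs u\|_1$. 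The only remaining step, which you correctly flag as bookkeeping, is pinning down the exact proposition in \cite{plan2012robust} that yields the stated sample complexity and failure probability on a single event.
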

\begin{proof}
Let us write $\cl K=\Sigma_K\cap B_2^N$ and $\cl K^*=\Sigma_K\cap S^{N-1}$. Using \cite[Prop. 4.3]{plan2012robust} with $\tau=0$, we know that
there exist two constants $C,c>0$ such that if 
$$
M \geq C \delta^{-6} K \log (2N/K)
$$
and if $\bs \Phi=(\bs\varphi_1,\cdots,\bs\varphi_M)^T \sim \cl
N^{M\times N}(0,1)$ with $\bs \varphi_i\in\Rbb^N$ ($1\leq i\leq M$),
then, with probability at least $1-8 \exp(-c \delta^2 M)$,
$$
\sup_{\bs a \in \cl K^*,\,\bs b \in \cl K - \cl K} \big|f_{\bs a}(\bs b) -
\bb E f_{\bs a}(\bs b) \big| \leq \delta, 
$$
where $f_{\bs a}(\bs b) := \tinv{M}\sum_j \sign(\scp{\bs\varphi_j}{\bs
  a}) \scp{\bs\varphi_j}{\bs
  b}$.  Knowing that $\bb E f_{\bs a}(\bs b) = \sqrt{\frac{2}{\pi}}\,\scp{\bs
  a}{\bs b}$, this means that, under the same conditions,
$$
\sup_{\bs a \in \cl K^*,\,\bs b \in \cl K - \cl K} \big|\tinv{M}
\scp{\sign(\bs \Phi \bs a)}{\bs\Phi\bs b} -
\sqrt{\tfrac{2}{\pi}}\scp{\bs a}{\bs b}\big| \leq \delta. 
$$

In particular, for any $\bs u,\bs v \in \Sigma_K$, since $\bs u/\|\bs
u\| \in \cl K^*$ and $\bs v/\|\bs
v\| \in \cl K^* \subset \cl K - \cl K$, we have 
$$
|\tinv{M}
\scp{\sign(\bs \Phi \bs u)}{\bs\Phi\bs v} -
\sqrt{\tfrac{2}{\pi}}\|\bs u\|^{-1}\scp{\bs u}{\bs v}| \leq \delta \|\bs v\|. 
$$
Therefore, if $\scp{\bs u}{\bs v} = 0$, $|\tinv{M} \scp{\sign(\bs \Phi \bs u)}{\bs\Phi\bs v}| \leq \delta \|\bs
v\|$, while taking $\bs u=\bs
v$ leads to 
$$
|\tinv{M} \|\bs \Phi \bs u\|_1  -
\sqrt{\tfrac{2}{\pi}}\|\bs u\| | \leq \delta \|\bs u\|. 
$$
\end{proof}

\paragraph*{Remarks on $\delta$:} The dependency in $\delta^{-6}$ in \eqref{eq:cond-on-M} is
probably not optimal and could be improved. This is actually due to
the fact that this lemma is extendable to much more general sets than
$\cl K$ (\eg compressible signals) \cite{plan2012robust}. For having
only \eqref{eq:l1-norm-preserv}, \cite[Lemma 5.3]{plan2011dimension} shows
that a dependency in $\delta^{-4}$ is allowed. Moreover,
\cite{jacques2011dequantizing} shows that \eqref{eq:l1-norm-preserv}
holds of $M\geq M_0$ with $M_0 = O(\delta^{-2} K\log N/K)$. Proving
that \eqref{eq:l1-orthog-preserv} is respected from the same number of
measurements is an open problem.

\begin{proof}[Proof of Theorem \ref{thm:bpdn-ell_1-instance}]
  We follow partially the procedure given in \cite{candes2008rip} with
  an adaption due to the $\ell_1$-norm fidelity of BPDN-$\ell_1$. Let us write $\bs x^*$ the solution of BPDN-$\ell_1$ and $\bs
  x^*=\bs x+ \bs h$. In order to bound the reconstruction error of
  BPDN-$\ell_1$, we have to characterize the behavior of
  $\|\bs x^* - \bs x\| = \|\bs h\|$. 

  We define $T_0 = \supp \bs x_K$ and a partition $\{T_k:1\leq k \leq
  \lceil (N-K)/K\,\rceil\}$ of the support of $\bs h_{T_0^c}$. This
    partition is determined by ordering elements of $\bs h$ off of the
    support of $\bs x_K$ in decreasing absolute value. We have $|T_k|=K$
  for all $k\geq 1$, $T_k\,\cap\,T_{k'}=\emptyset$ for $k\neq k'$, and
  crucially that $|h_{j}|\leq |h_{i}|$ for all $j\in T_{k+1}$ and
  $i\in T_{k}$.
 
  We start from 
  \begin{equation}
    \label{eq-pt2-a}
    \|\bs h\|\ \leq\ \|\bs h_{T_{01}}\|\ +\ \|\bs h_{T_{01}^c}\|,
  \end{equation}
  with $T_{01} = T_0 \cup T_1$, and we are going to bound separately the
  two terms of the RHS. In \cite{candes2008rip}, it is proved that
  \begin{equation}
    \label{eq:candes-compress-bound}
    \|\bs h_{T_{01}^c}\|\leq \sum_{k\geq 2} \|\bs h_{T_k}\| \leq \|\bs h_{T_{01}}\| +
    2e_0(K),
  \end{equation}
  with $e_0(K) = \tfrac{1}{\sqrt{K}} \norm{\bs x_{T_0^c}}_1$.
Therefore, 
\begin{equation}
\label{eq:bound-h-ht01-e0}
  \|\bs h\|\ \leq\ 2\|\bs h_{T_{01}}\|\ + 2 e_0(K).
\end{equation}
Let us bound now $\|\bs h_{T_{01}}\|$. We have 
\begin{multline*}
\|\bs\Phi \bs h_{T_{01}}\|_1\ =\ \scp{\sign(\bs\Phi \bs h_{T_{01}})}{\bs\Phi
  \bs h_{T_{01}}}\
=\ \scp{\sign(\bs \Phi \bs h_{T_{01}})}{\bs \Phi \bs h}\ -\ \sum_{k\geq
  2}\,\scp{\sign(\bs\Phi \bs h_{T_{01}})}{\bs \Phi \bs h_{T_k}}.
\end{multline*}
By H\"older inequality,
$$
\scp{\sign(\bs \Phi \bs h_{T_{01}})}{\bs \Phi \bs h} \leq \|\bs \Phi \bs h\|_1\ \leq\ \|\bs \Phi \bs x - \bs y\|_1 +  \|\bs
\Phi \bs x - \bs y\|_1 \leq 2\epsilon.
$$
For any $k\geq 2$, since $\bs h_{T_{01}}$ and $\bs h_{T_k}$ are $2K$- and
$K$-sparse, respectively, with $\scp{\bs h_{T_{01}}}{\bs h_{T_k}}=0$,
we know from \eqref{eq:l1-orthog-preserv-3K} that 
\begin{equation*}
|\scp{\sign(\bs \Phi \bs h_{T_{01}})}{\bs \Phi \bs h_{T_k}}|\ \leq\
M\,\delta_{3K} \|\bs h_{T_k}\|.
\end{equation*}
Therefore, using \eqref{eq:l1-norm-preserv-2K} and \eqref{eq:candes-compress-bound}, 
\begin{align*}
  M (\nu-\delta_{2K})\|\bs h_{T_{01}}\|&\leq
   \|\bs \Phi \bs h_{T_{01}}\|_1
\leq
  2\epsilon + M \delta_{3K}\,\sum_{k\geq 2}\|\bs h_{T_k}\|\\
&\leq 2\epsilon  + M \delta_{3K} (\|\bs h_{T_{01}}\| + 2 e_0(K)),
\end{align*}
or equivalently,
\begin{align*}
  \|\bs h_{T_{01}}\|&\leq \tfrac{2}{\nu -
    (\delta_{2K} + \delta_{3K})} \big( \frac{\epsilon}{M}\  +\ \delta_{3K}\,e_0(K) \big).
\end{align*}
Using \eqref{eq:bound-h-ht01-e0}, we find,
\begin{align*}
\|\bs h\|\ \leq\  \tfrac{4}{\nu -
    (\delta_{2K} + \delta_{3K})} \frac{\epsilon}{M}\ +\
  4\,\tfrac{\nu + \delta_{3K} - \delta_{2K}}{\nu -
    (\delta_{2K} + \delta_{3K})} \,e_0(K).
\end{align*}
Finally, taking $\delta_{2K} + \delta_{3K} \leq \nu -
\inv{2}$ provides the result.
\end{proof}


\begin{thebibliography}{1}

\bibitem{plan2012robust}
Y.~Plan and R.~Vershynin,
\newblock ``Robust 1-bit compressed sensing and sparse logistic regression: A
  convex programming approach,''
\newblock {\em IEEE Transactions on Information Theory, to appear.}, 2012.

\bibitem{candes2008rip}
E.~Cand{\`e}s,
\newblock ``{The restricted isometry property and its implications for
  compressed sensing},''
\newblock {\em Compte Rendus de l{'}Academie des Sciences, Paris, Serie I},
  vol. 346, pp. 589--592, 2008.

\bibitem{plan2011dimension}
Y.~Plan and R.~Vershynin,
\newblock ``Dimension reduction by random hyperplane tessellations,''
\newblock {\em arXiv preprint arXiv:1111.4452}, 2011.

\bibitem{jacques2011dequantizing}
L.~Jacques, D.~K. Hammond, and J.~M Fadili,
\newblock ``Dequantizing compressed sensing: When oversampling and non-gaussian
  constraints combine,''
\newblock {\em Information Theory, IEEE Transactions on}, vol. 57, no. 1, pp.
  559--571, 2011.

\end{thebibliography}

\end{document}